\newcommand{\tikzrect}[1]{}
\newcommand{\tikzgener}[1]{}
\tikzstyle{wall}=[line width=1pt]
\definecolor{p1col1}{rgb}{0.266122, 0.486664, 0.802529}
\definecolor{p1col2}{rgb}{0.513417, 0.72992, 0.440682}
\definecolor{p1col3}{rgb}{0.863512, 0.670771, 0.236564}
\definecolor{p1col4}{rgb}{0.857359, 0.131106, 0.132128}
\definecolor{p1col5}{rgb}{0.52734, 0.24218, 0.24218}
\definecolor{p1col6}{rgb}{0.45, 0.45, 0.45}
\definecolor{p2col1}{rgb}{0.264135, 0.201429, 0.745889}
\definecolor{p2col2}{rgb}{0.256326, 0.430921, 0.808553}
\definecolor{p2col3}{rgb}{0.324106, 0.60897, 0.708341}
\definecolor{p2col4}{rgb}{0.439128, 0.704968, 0.52925}
\definecolor{p2col5}{rgb}{0.597181, 0.742185, 0.36771}
\definecolor{p2col6}{rgb}{0.764712, 0.728302, 0.273608}
\definecolor{p2col7}{rgb}{0.88018, 0.631684, 0.227665}
\definecolor{p2col8}{rgb}{0.897354, 0.41824, 0.185007}
\definecolor{p2col9}{rgb}{0.857359, 0.131106, 0.132128}
\newcommand{\pone}{
\begin{tikzpicture}[baseline=1.2mm,scale=0.1275]
\protect\draw[wall] (1.2,0)--(1.2,2.8);
\protect\draw[wall] (4,1.2)--(1.2,1.2);
\protect\draw[wall] (2.8,4)--(2.8,1.2);
\protect\draw[wall] (0,2.8)--(2.8,2.8);
\end{tikzpicture}
}
\newcommand{\ptwo}{
\begin{tikzpicture}[baseline=1.2mm,scale=0.1275]
\protect\draw[wall] (2.8,0)--(2.8,2.8);
\protect\draw[wall] (4,2.8)--(1.2,2.8);
\protect\draw[wall] (1.2,4)--(1.2,1.2);
\protect\draw[wall] (0,1.2)--(2.8,1.2);
\end{tikzpicture}
}
\newcommand{\pthr}{
\begin{tikzpicture}[baseline=1.2mm,scale=0.1275]
\protect\draw[wall] (2,0)--(2,4);
\protect\draw[wall] (0,1.2)--(2,1.2);
\protect\draw[wall] (2,2.8)--(4,2.8);
\end{tikzpicture}
}
\newcommand{\pfiv}{
\begin{tikzpicture}[baseline=1.2mm,scale=0.1275]
\protect\draw[wall] (2,0)--(2,4);
\protect\draw[wall] (0,2.8)--(2,2.8);
\protect\draw[wall] (2,1.2)--(4,1.2);
\end{tikzpicture}
}
\newcommand{\pfou}{
\begin{tikzpicture}[baseline=1.2mm,scale=0.1275]
\protect\draw[wall] (0,2)--(4,2);
\protect\draw[wall] (2.8,2)--(2.8,4);
\protect\draw[wall] (1.2,0)--(1.2,2);
\end{tikzpicture}
}
\newcommand{\psix}{
\begin{tikzpicture}[baseline=1.2mm,scale=0.1275]
\protect\draw[wall] (0,2)--(4,2);
\protect\draw[wall] (2.8,0)--(2.8,2);
\protect\draw[wall] (1.2,2)--(1.2,4);
\end{tikzpicture}
}
\newcommand{\psev}{
\begin{tikzpicture}[baseline=1.2mm,scale=0.1275]
\protect\draw[wall] (0,2)--(1.65,2);
\protect\draw[wall] (1.65,0)--(1.65,4);
%\protect\draw[wall] (1.2,0)--(2.8,0);
%\protect\draw[wall] (1.2,4)--(2.8,4);
\protect\draw[wall] (2.35,0)--(2.35,4);
\protect\draw[wall] (2.35,2)--(4,2);
\end{tikzpicture}
}
\newcommand{\peig}{
\begin{tikzpicture}[baseline=1.2mm,scale=0.1275]
\protect\draw[wall] (2,0)--(2,1.65);
\protect\draw[wall] (0,1.65)--(4,1.65);
%\protect\draw[wall] (0,1.2)--(0,2.8);
%\protect\draw[wall] (4,1.2)--(4,2.8);
\protect\draw[wall] (0,2.35)--(4,2.35);
\protect\draw[wall] (2,2.35)--(2,4);
\end{tikzpicture}
}
\def\P#1{\ifnum#1=1 $\pone$
\else\ifnum#1=2 $\ptwo$
\else\ifnum#1=3 $\pthr$
\else\ifnum#1=4 $\pfou$
\else\ifnum#1=5 $\pfiv$
\else\ifnum#1=6 $\psix$
\else\ifnum#1=7 $\psev$
\else\ifnum#1=8 $\peig$
 \else            \relax
\fi\fi\fi\fi\fi\fi\fi\fi}
\newcommand{\td}{
\begin{tikzpicture}[baseline=0.5mm,scale=0.17]
\protect\draw[wall] (0,1.7)--(2,1.7);
\protect\draw[wall] (1,0)--(1,1.7);
\end{tikzpicture}
}
\newcommand{\tu}{
\begin{tikzpicture}[baseline=0.5mm,scale=0.17]
\protect\draw[wall] (0,0)--(2,0);
\protect\draw[wall] (1,0)--(1,1.7);
\end{tikzpicture}
}
\def\internallinenumbers{}
\definecolor{darkgreen}{rgb}{0,0.4,0}
\definecolor{BrickRed}{rgb}{0.65,0.08,0}
\newtheorem{thm}{Theorem}
\newtheorem{lem}[thm]{Lemma}
\newcommand{\oeis}[1]{\text{\href{https://oeis.org/#1}{{\small \tt #1}}}}
\newcommand{\N}{\mathbb{N}}
\newcommand{\RR}{\mathcal{R}}
\newcommand{\V}{\mathcal{V}}
\title[From geometry to generating functions: rectangulations and permutations]{\texorpdfstring{From geometry to generating functions:\newline rectangulations and permutations}{From geometry to generating functions: rectangulations and permutations}}
\author[Andrei Asinowski and Cyril Banderier]{Andrei Asinowski\thanks{Supported by FWF --- Austrian Science Fund, Grant P32731.}\addressmark{1} 
\and Cyril Banderier\addressmark{2}}
\address{\addressmark{1}{Institut für Mathematik, Universität Klagenfurt, Austria. {\normalfont\footnotesize\protect\textcolor{darkgreen}{\protect\url{https://me.aau.at/~anasinowski}}}}\newline 
\addressmark{2}{LIPN, CNRS \& Université~Sorbonne Paris Nord, France.  {\normalfont\footnotesize\protect\textcolor{darkgreen}{\protect\url{https://lipn.fr/~banderier}}}}}
\abstract{We enumerate several classes of pattern-avoiding rectangulations. 
We establish new bijective links with pattern-avoiding permutations, prove that their generating functions are algebraic, and confirm several conjectures by Merino and Mütze.
We also analyse a new class of rectangulations, called whirls, using a generating tree.}
\keywords{Rectangulations, permutations, pattern avoidance, generating functions}
\begin{document}
\maketitle
\section{Introduction}
A \textit{rectangulation} of size $n$ is a tiling of a rectangle by $n$ rectangles 
such that no four rectangles meet in a point.  
In the literature, rectangulations are also 
called \textit{floorplans} or \textit{rectangular dissections}.
See Section~\ref{sec:weak_guil_sum} and~\cite{Asinowski24,Cardinal18,MerinoMuetze2023} for basic definitions and results.
  
Such structures appear naturally for architectural building plans,
integrated circuits (see Figure~\ref{fig:art}), 
and were investigated since the 70s with some graph theory, 
 computational\linebreak geometry, and combinatorial optimization point of views~\cite{Mitchell1976,Steadman1983}.
Then, in the 2000s, rectangulations began to be investigated with more 
combinatorial  approaches~\cite{Ackerman2006,AsinowskiBarequetBousquetMelouMansourPinter2013,AsinowskiMansour2010,Eppstein12,Reading12}:\linebreak
it was shown that some important families of rectangulations are enumerated by famous integer sequences (e.g., Baxter, Schröder, Catalan numbers) and that they have strong links with pattern-avoiding permutations (as studied in the seminal~article~\cite{ChungGraham1978}).

\bgroup \def\myheight{2.762cm}  \setlength\topsep{2pt}\setlength\partopsep{2pt}
\begin{figure}[bh]\centering \internallinenumbers
\includegraphics[height=\myheight]{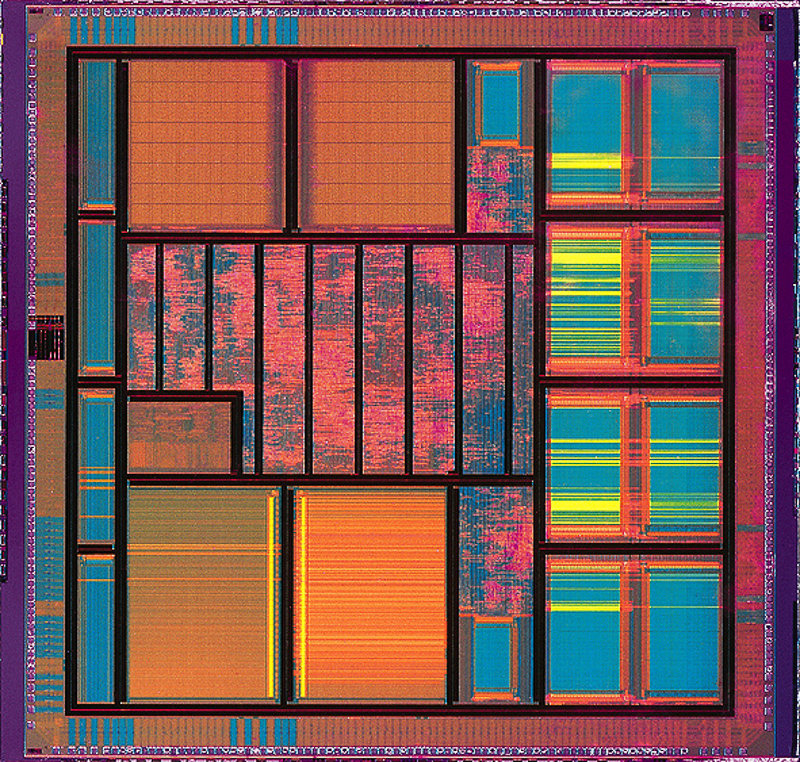} 
\qquad \qquad \includegraphics[height=\myheight]{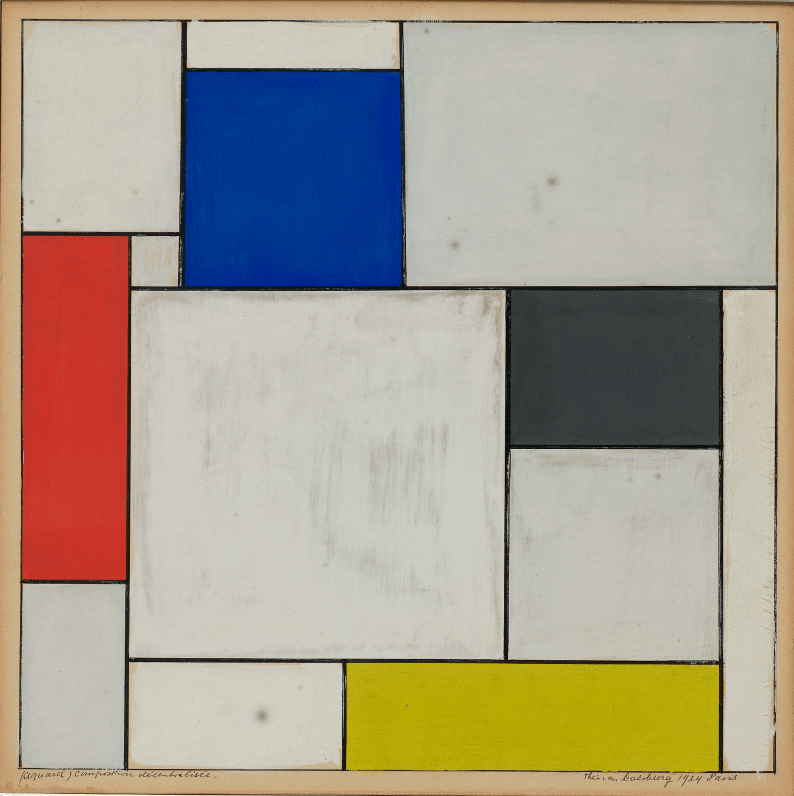}
\qquad \qquad \includegraphics[height=\myheight]{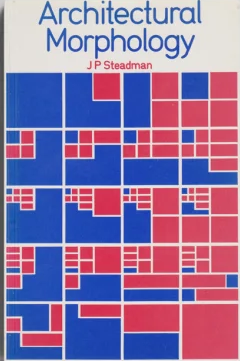}
\qquad \qquad \includegraphics[height=\myheight]{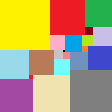}
\abovecaptionskip=-4mm
\belowcaptionskip=-6mm
\caption{{\protect\small  (a)
\href{https://en.wikipedia.org/wiki/Very_Large_Scale_Integration}{VLSI} are rectangulations playing an important role for 
 integrated circuits.\\
(b)  The artwork \textit{Composition d\'ecentralis\'ee}, 1924, by \href{https://en.wikipedia.org/wiki/Theo_van_Doesburg}{Theo van Doesburg} (1883--1931).\\
(c)  A book on the geometry of building plans~\cite{Steadman1983}. Its cover is not a rectangulation,  
since it contains instances of 4 rectangles meeting in a point.\\
(d) The minimal solution of Tutte's \href{https://tms.soc.srcf.net/about-the-tms/the-squared-square/}{"Squaring the square"} 
is a rectangulation~\cite{Duijvestijn78}. 
}}\label{fig:art}
\end{figure}
\egroup

\newpage
\section{Patterns in rectangulations and summary of our results}\label{sec:weak_guil_sum}

\begin{minipage}[b]{0.82\textwidth} \internallinenumbers
Two rectangulations are \textit{equivalent}
(``strongly equivalent'' in~\cite{Asinowski24})
if one can translate (horizontally or vertically) 
some of their segments (without meeting an endpoint of any other segment) so that they coincide. 
In the drawings on the right, only the first two rectangulations are equivalent.

\setlength{\parindent}{15pt}
In this article we deal with 
\textit{patterns in rectangulations}. 
In each drawing, we highlight an occurrence of the pattern \P2 (in green), \P3 (in blue),  \P7 (in red). 
A rectangulation contains \P7 if there is a (possibly further partitioned) rectangle
(here in gray)
such that the segment containing its left side has  an adjacent horizontal segment on the left,  
and the segment containing its  right side has an adjacent horizontal segment on its right.
\end{minipage}
\hspace{1pt}
\begin{minipage}[b]{0.15\textwidth} 
 
\includegraphics[scale=0.245]{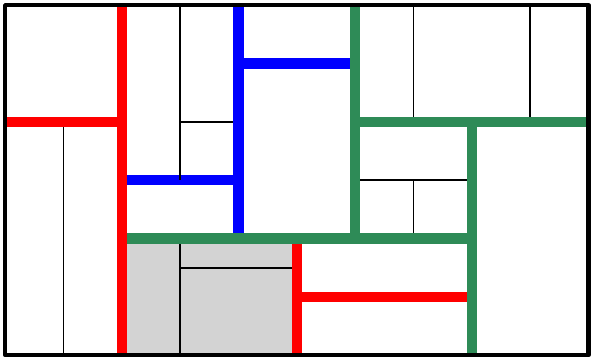}

\vspace{5pt}
\includegraphics[scale=0.245]{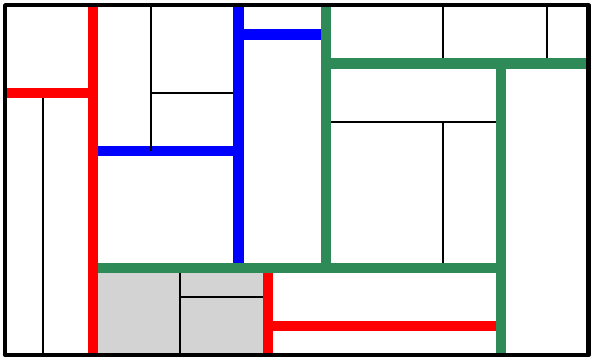}

\vspace{5pt}
\includegraphics[scale=0.245]{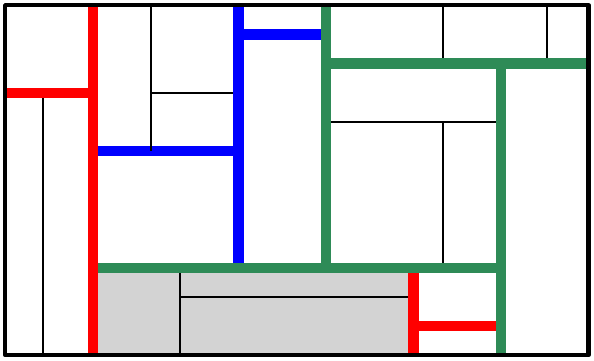}

\vspace{2pt}

\end{minipage}

We are interested in the enumeration of different natural classes of rectangulations,
where the goal is to count the number of non-equivalent rectangulations of size~$n$.\linebreak
E.g., \P3\P4-avoiding rectangulations are enumerated 
by Baxter numbers~\cite{Ackerman2006, ChungGraham1978}.

Recently, Arturo Merino and Torsten Mütze tackled the question of the
 exhaustive generation of rectangulations avoiding any subset of $\{\!\P1\!, \P2\!, \P3\!, \P4\!, \P5\!, \P6\!, \P7\!, \P8\!\}.$\linebreak
In~\cite{MerinoMuetze2023}, they present an efficient algorithm to generate such  
rectangulations\footnote{Let us here advertise the section dedicated to rectangulations in the nice \href{http://www.combos.org/rect}{Combinatorial Object Server},
created by Frank Ruskey, and now handled by Torsten Mütze, Joe Sawada, and Aaron Williams.}.
This led to a surprising observation: many sequences coincide (at least up to size $12$) 
with integer sequences which already appeared in the literature, for apparently unrelated problems.

In Theorem~\ref{thm:main}, 
we solve all the cases related to 
rectangulations avoiding \P1\P2\P3\P4.\linebreak
These are \textit{guillotine diagonal rectangulations}, 
that correspond to \textit{separable permutations}.
When they avoid further patterns
among \P5\P6\P7\P8, we obtain the following table\footnote{All other cases are equivalent to those presented here via straightforward symmetries.}, and\linebreak
provide generating functions for all these cases.
(See~\cite{Bona2022} for the notion of \textit{vincular~patterns}.)

\noindent
\begin{minipage}[b]{0.96\textwidth}
\begin{center}
\scalebox{0.903}{
\hspace{16pt}
\begin{tabular}{|c|c|c|c|c|}
\hline 
\begin{tabular}{c}
Entry in \\
\cite[Table 3]{MerinoMuetze2023}
\end{tabular}
&
\begin{tabular}{c}
Guillotine diagonal \\
rectangulations avoiding\ldots
\end{tabular}
& 
\begin{tabular}{c}
Separable permu-\\
tations avoiding\ldots
\end{tabular}
& G.f.  
& OEIS 
\\ \hline 
1234 &
$ \emptyset $     & $ 	\emptyset $ & alg.  
& \oeis{A006318}  \\  
12345 &
\P5   & $2\underline{14}3$  & alg.  
& \oeis{A106228}  \\  
12347 &
\P7    & $21354$ & alg.  
& \oeis{A363809}  \\  
123456 &
\P5\P6   & $2\underline{14}3, 3\underline{41}2$ & alg. 
& \oeis{A078482} \\  
123457 &
\P5\P7   & $2143$ & alg.
  
 & \oeis{A033321}   \\  
 123458 &
\P5\P8   & $2\underline{14}3, 45312$ &  alg.    & \oeis{A363810} \\  
123478 &
\P7\P8   & $21354, 45312$ &   rat.   & \oeis{A363811}   \\  
1234567 &
\P5\P6\P7  & $2143, 3\underline{41}2$ & alg.  &  \oeis{A363812}    \\  
1234578 &
\P5\P7\P8  & $2143, 45312$ & rat.   & \oeis{A363813}   \\  
12345678 &
\P5\P6\P7\P8 & $2143, 3412$ & rat.  
& \oeis{A006012} \\ \hline
\end{tabular}}
\end{center}
\end{minipage}
\smallskip

In Section~\ref{sec:vortex}, we additionally prove algebraicity of some non-guillotine models, such as 
\textit{vortex rectangulations}~(\oeis{A026029}, case 1345678 in~\cite{MerinoMuetze2023})
and \textit{whirls}~(\oeis{A002057}).

\newpage
\section{Guillotine diagonal rectangulations}\label{sec:guil}
The patterns $P_1=\P1$, $P_2=\P2$, $P_3=\P3$, $P_4=\P4$ were considered in some earlier 
work (for example~\cite{Ackerman2006, Cardinal18}) since they characterize some special kinds of rectangulations.

\smallskip
\noindent
\begin{minipage}[b]{0.74\textwidth}        \internallinenumbers
\ \ \ \   A rectangulation $\RR$ is \textit{guillotine} if it is of size $1$,
or if it has a \textit{cut} (a segment whose endpoints lie on opposite sides of~$R$) 
that splits it into two guillotine rectangulations.
It is well known~\cite{Ackerman2006} that a rectangulation is guillotine if and only if it avoids 
$P_1=\P1$ and $P_2=\P2$ (these two patterns are called \textit{windmills}). 
\end{minipage} 
\begin{minipage}[b]{0.25\textwidth}
\begin{center}
\ \includegraphics[scale=0.53]{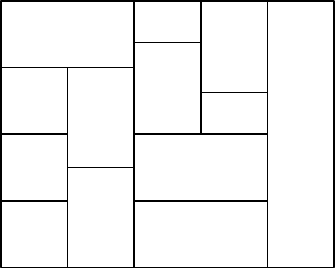} 
\end{center}
\end{minipage}

\noindent
\begin{minipage}[b]{0.79\textwidth}  \internallinenumbers
\ \ \ \   A rectangulation $\RR$ is called \textit{diagonal} if it avoids $P_3=\pthr$ and $P_4=\pfou$.
This notion is due to the fact that such a rectangulation can be drawn
so that the NW--SE diagonal of $R$ intersects all the rectangles.
At the same time, diagonal rectangulations are frequently seen as canonical representatives of 
rectangulations up to the ``weak equivalence''~\cite{Asinowski24, Cardinal18}.
\end{minipage}
\begin{minipage}[b]{0.2\textwidth}
\begin{center}
\ \includegraphics[scale=0.53]{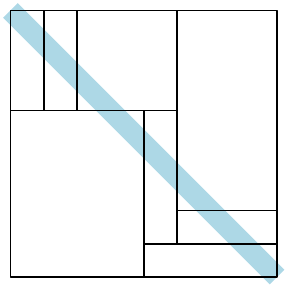} 
\end{center}
\end{minipage}

These two classes have --- in different ways --- stronger structural properties than the general case.
Therefore we expected that families 
that avoid these four patterns and any other subset of patterns  
of $\{\P5, \P6, \P7, \P8\}$
will yield noteworthy results. 
There are essentially ten different such models, all listed in~\cite[Table 3]{MerinoMuetze2023}. 
Amongst these 10 cases, 3 of them can be solved by ad-hoc bijections with trees (see~\cite{AsinowskiBarequetBousquetMelouMansourPinter2013,AsinowskiMansour2010}),
2 are conjectured by Merino and Mütze to lead to algebraic generating functions, 
and for the remaining 5 no conjectures were provided.
Below we present a unified framework which allows us to solve these 10 cases (confirming en passant the conjectures of Merino and Mütze).
The main result of this section is Theorem~\ref{thm:main}, 
which, in particular, states that all these cases are in fact algebraic!

\begin{thm}\label{thm:main}
The generating functions for the ten guillotine cases are algebraic.
\bgroup
\setlength{\belowdisplayskip}{5pt}
\begin{enumerate}
\item The generating function of  rectangulations avoiding \P1\P2\P3\P4 is
\begin{equation*}F(t)=\frac{1-t-\sqrt{1-6t+t^2}}{2}.\end{equation*}
\item The generating function of  rectangulations avoiding \P1\P2\P3\P4\P5 satisfies
\begin{equation*}tF^3 + 2tF^2 + (2t - 1)F + t=0.\end{equation*}
\item The generating function  of rectangulations avoiding \P1\P2\P3\P4\P7) satisfies\\
\bgroup \thinmuskip=0mu \medmuskip=0mu \thickmuskip=0mu
{\small $ t^4(t - 2)^2F^4 + 
 t(t - 2)(4t^3 - 7t^2 + 6t - 1)F^3 + (2t^4 - t^3 - 2t^2 + 5t - 1)F^2 - (4t^3 - 7t^2 + 6t - 1)F + t^2 = 0$.}
\egroup
\item The generating function of rectangulations avoiding \P1\P2\P3\P4\P5\P6 is
\begin{equation*}F(t)=\frac{1-3t+t^2-\sqrt{1-6t+7t^2-2t^3+t^4}}{2t}.\end{equation*}
\item The generating function of  rectangulations avoiding \P1\P2\P3\P4\P5\P7 is
\begin{equation*}F(t) = \frac{(1-t)(1-2t) - \sqrt{(1-t)(1-5t)}}{2t(2-t)}.\end{equation*}
\item The generating function of  rectangulations avoiding \P1\P2\P3\P4\P5\P8 satisfies

\bgroup
\thinmuskip=0mu \medmuskip=0mu \thickmuskip=0mu
\scalebox{0.981}
{\small $t^8(t - 2)^2F^4-t^3(t^2-3t+2) (t^5 - 7t^4 + 4t^3 - 6t^2 + 5t - 1)F^3 - t(t - 1)(4t^7 - 22t^6 + 37t^5 - 42t^4 + 53t^3 - 35t^2 +$}  \\
\scalebox{0.981}
{\small $10t - 1)F^2 $
$-(5t^6 - 16t^5 + 15t^4 - 28t^3 + 23t^2 - 8t + 1)(t - 1)^2F - (2t^5 - 5t^4 + 4t^3 - 10t^2 + 6t - 1)(t - 1)^2 = 0$.}
\egroup
\item The generating function of  rectangulations avoiding \P1\P2\P3\P4\P7\P8 is
\begin{equation*}F(t)=
\frac{t(1-16t+11t^2-434t^3+1045t^4-1590t^5+1508t^6-846t^7+252t^8-30t^9)}
{(1-2t)^4  (1-3t+t^2)^2  (1-4t+2t^2)}.\end{equation*}
\item The generating function of rectangulations avoiding \P1\P2\P3\P4\P5\P6\P7 is 
\begin{equation*}F(t) = \frac{1-3t-t^2 + 2t^3 -\sqrt{1-6t+7t^2+2t^3+t^4}}{2t^2(2-t)}.\end{equation*}
\item The generating function of  rectangulations avoiding \P1\P2\P3\P4\P5\P7\P8 is
\begin{equation*}F(t)=\frac{t(1-t)(1-7t+16t^2-11t^3+2t^4)}{(1-4t+2t^2)(1-3t+t^2)^2}.\end{equation*}
\item The generating function of rectangulations avoiding \P1\P2\P3\P4\P5\P6\P7\P8 is
\begin{equation*}F(t)=\frac{t(1-2t)}{1 - 4t + 2t^2}.\end{equation*}
\end{enumerate}
\egroup
\end{thm}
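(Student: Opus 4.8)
The plan is to exploit the recursive \emph{guillotine (slicing) structure} shared by all ten families, in exact parallel with the substitution decomposition of separable permutations. Every guillotine diagonal rectangulation of size $\geq 2$ admits a unique \emph{canonical slicing}: it splits into a maximal left-to-right sequence of vertical cuts, or a maximal top-to-bottom sequence of horizontal cuts, where each block of the sequence is either a single cell or a rectangulation whose own top-level slicing runs in the orthogonal direction. Writing $F$ for the generating function (by number of cells, variable $t$) and $H,V$ for the sub-series of rectangulations whose canonical slicing is horizontal resp.\ vertical, this gives the symmetric system
\begin{equation*}
F = t + H + V, \qquad H = \frac{(t+V)^2}{1-(t+V)}, \qquad V = \frac{(t+H)^2}{1-(t+H)}.
\end{equation*}
By the symmetry $H=V$ it collapses to a single quadratic $2u^2-(1+t)u+t=0$ in $u=t+H$, and since $F=2u-t$ one recovers $F=\tfrac12\bigl(1-t-\sqrt{1-6t+t^2}\bigr)$, the large-Schröder series \oeis{A006318}. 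This settles case~(1) and fixes the framework for everything that follows.

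First I would translate each additional pattern $\P5,\P6,\P7,\P8$ into a \emph{local constraint on the canonical slicing}. Each of these patterns forbids a specific configuration at the interface between two consecutive blocks of a slice (or constrains the first/last block, or a nested boundary), so that avoidance becomes a statement about which block-types may be adjacent and which may occur at the ends of a sequence. To encode these adjacency rules I would refine $H$ and $V$ by the relevant feature of a block's facing boundary --- essentially whether the extreme cell of a block carries a wall on the side meeting its neighbour. The bare \emph{sequence} constructions above are then replaced by transfer-matrix-type sequences over these refined series, yielding a \emph{finite} polynomial system whose solution provides the generating function. The correspondence with separable permutations avoiding the patterns listed in the table serves as a guide (the vincular patterns $2\underline{14}3$, $3\underline{41}2$ flag precisely the \emph{nested}, rather than flat, adjacencies) and as a numerical cross-check.

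Solving these systems gives the remaining nine cases. For the combinations that forbid only purely local adjacencies --- the $\P7\P8$, $\P5\P7\P8$ and $\P5\P6\P7\P8$ families, i.e.\ cases (7), (9), (10) --- the refined system is linear in the block series and eliminates to a \emph{rational} generating function. For the others the coupling between the horizontal and vertical refinements does not linearize, and elimination of the auxiliary series raises the degree: I expect a cubic for case~(2), quadratics for cases (4), (5), (8), and genuinely \emph{quartic} minimal polynomials for cases (3) and (6), matching the stated equations. Where a pattern restricts a nested boundary rather than a flat adjacency, I would introduce a single catalytic variable marking the rightmost (resp.\ bottommost) spine of the decomposition and remove it by the kernel method before eliminating.

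The main obstacle will be the faithful translation step: proving the exact equivalence between each geometric pattern-avoidance condition and its combinatorial encoding on canonical slicings, while verifying that the refinement introduces neither over- nor under-counting across the equivalence classes of rectangulations. The subsequent algebra --- eliminating the auxiliary and catalytic variables to obtain the minimal polynomials, and in particular certifying the quartics of cases (3) and (6) and the dense rational fraction of case~(7) --- is routine but heavy, and I would discharge it with computer-algebra elimination (resultants or Gröbner bases), using the initial terms from \cite{MerinoMuetze2023} to guess-and-verify the final equations.
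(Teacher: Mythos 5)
Your plan is essentially the paper's: the canonical-slicing system $F=t+H+V$, $H=\frac{(t+V)^2}{1-(t+V)}$, $V=\frac{(t+H)^2}{1-(t+H)}$ is literally the paper's system for ascending/descending separable permutations under the standard bijection $\delta$, and the paper likewise translates each of $\P5,\P6,\P7,\P8$ into a constraint on the blocks of that decomposition and then solves the resulting polynomial systems by computer algebra. The predicted degrees (rational for cases 7, 9, 10; quadratic for 4, 5, 8; cubic for 2; quartic for 3 and 6) all match.

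The one place where your mechanism diverges from the paper's, and where I think it would need adjustment, is the treatment of the non-vincular patterns coming from $\P7$ and $\P8$ (cases 3, 6, 7, 9). These do not reduce to ``which block-types may be adjacent'': e.g.\ $\P7$-avoidance becomes $21354$-avoidance, which forces an ascending permutation to have at most two descending blocks, adjacent, with the first $213$-avoiding and the second $132$-avoiding --- a constraint on the \emph{internal structure} of the blocks, propagating recursively. A transfer matrix over a finite set of boundary features of the extreme cells cannot capture this; and the catalytic-variable/kernel-method fallback you propose is heavier than what is needed and not obviously set up correctly. The paper instead introduces auxiliary recursively defined families (separable permutations avoiding $213$, $132$, $231$, possibly together with the vincular patterns), writes a second small polynomial system for their generating functions $S_A,S_D$, and plugs these into the main system. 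With that substitution your plan goes through; without it, cases 3, 6, 7 and 9 would stall at the translation step you yourself flag as the main obstacle.
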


\smallskip

We now present \textit{separable permutations} --- a fundamental class which will be used in the proof of this theorem. This notion was coined in~\cite{bose98}.

\subsection{Separable permutations and rectangulations avoiding \P1\P2\P3\P4} \label{sec:1234}  
\medskip

A permutation $\pi$ is \textit{separable} if it is either of size $1$ (a \textit{singleton}),
or if it is (recursively) a direct sum of separable permutations 
(in this case $\pi$ is called \textit{ascending separable})
or a skew sum of separable permutations
(in this case $\pi$ is called \textit{descending separable}).
We refer to~\cite{Bona2022} for these notions.
Accordingly, separable permutations are precisely the non-empty
$(2413, 3142)$-avoiding permutations~\cite{bose98}.

The first key step in the proof of Theorem~\ref{thm:main} 
is ``translating'' (sets of) geometric patterns into (sets of) permutation patterns. 
In all 10 cases we obtain a bijection between a subclass of guillotine rectangulations
and a subclass of separable permutations. 
We provide details for the first three cases, and just give the key decompositions for the other cases.

\medskip

\medskip

\noindent\textbf{Case $1$: Guillotine diagonal rectangulations}.\label{lab:case1}
They are in bijection with separable permutations 
(see, e.g., \cite{Ackerman2006, AsinowskiMansour2010}).
Here is a natural recursive bijection:
the rectangulation of size 1 is mapped to the permutation of size 1,
and the recursive steps are illustrated in the following drawing.
The left and the middle illustrations describe 
the transformation for horizontal and vertical cuts,
and the right illustration is an example of size 11.

\medskip

\noindent\includegraphics[scale=0.71]{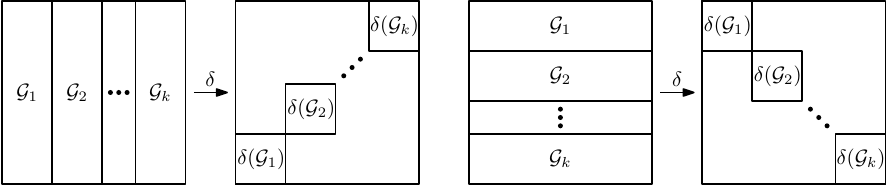} \hspace{4mm} \includegraphics[scale=0.71]{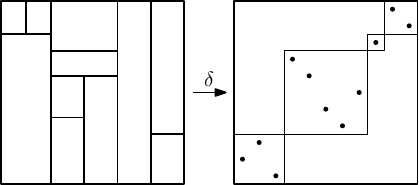} 

\medskip

The recursive definition of separable permutations translates directly
to a system of equations that binds $A(t)$, $D(t)$, and $F(t)=t+A(t)+D(t)$, the 
generating functions\linebreak for ascending, descending, and all separable permutations.
Since an ascending (resp.~descending) separable permutation can be seen as a sequence of singletons and descending (resp.~ascending) separable permutations (``blocks''),
we obtain the system  
$\left\{ {A = \frac{(t+D)^2}{1-(t+D)}}, \ \ 
{D = \frac{(t+A)^2}{1-(t+A)}} \right\}$.
Due to the symmetry $A(t)=D(t)$, we have
$A = \frac{(t+A)^2}{1-(t+A)}$.
This yields
$F(t)=\frac{1-t-\sqrt{1-6t+t^2}}{2}$,
the generating function of Schröder numbers (\oeis{A006318}).

\bigskip

\noindent\textbf{Case~2: $\P5$-avoiding guillotine diagonal rectangulations}.\label{lab:case2}

\begin{lem}~\label{thm:PP12345} 
A guillotine diagonal rectangulation $\RR$ avoids $\P5$
if and only if $\delta(\RR)$ avoids $2\underline{14}3$.
\end{lem}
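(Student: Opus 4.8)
The plan is to prove both implications simultaneously by induction on the size of $\RR$, exploiting the recursive structure of $\delta$ recalled in Case~1: a guillotine diagonal rectangulation of size $\ge 2$ splits at its top level into irreducible blocks separated by a maximal family of parallel cuts, and under $\delta$ this becomes the decomposition of $\delta(\RR)$ into its leftmost blocks under a direct sum (vertical cuts) or a skew sum (horizontal cuts). First I would fix the granularity that makes the bookkeeping clean: decompose $\RR$ into blocks $C_1, \dots, C_m$ separated by the top-level cuts, so that $\delta(\RR) = \delta(C_1) \ast \cdots \ast \delta(C_m)$ with $\ast \in \{\oplus, \ominus\}$ and each $\delta(C_i)$ being $\ast$-indecomposable. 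Every occurrence of $\P5$ (resp.\ of $2\underline{14}3$) is then of one of two kinds: one entirely contained in a single block $C_i$, handled by the induction hypothesis, or a ``new'' occurrence straddling the top-level decomposition. It remains to show that new occurrences of $\P5$ and of $2\underline{14}3$ arise under exactly the same combinatorial condition.

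On the permutation side, I would analyse new occurrences of $2\underline{14}3$ directly. In a skew sum the two largest pattern values would have to sit at early (hence large-valued) positions while being indexed $3,4$, which is impossible; so no new occurrence arises when the top cuts are horizontal. In a direct sum $\delta(C_1) \oplus \cdots \oplus \delta(C_m)$, a short case analysis on the value order shows that any straddling occurrence uses entries from exactly two consecutive blocks $C_i, C_{i+1}$, with the two smallest values in $C_i$ and the two largest in $C_{i+1}$. The vincular constraint --- that the $1$ and the $4$ occupy adjacent positions --- then forces the $1$ to be the last entry of $C_i$ and the $4$ the first entry of $C_{i+1}$. Since each block of size $\ge 2$ is $\oplus$-indecomposable, i.e.\ a skew sum, its last entry is never its maximum and its first entry is never its minimum; the required inversions therefore always exist, and a new occurrence of $2\underline{14}3$ exists if and only if some two consecutive blocks $C_i, C_{i+1}$ both have size $\ge 2$.

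On the geometry side, I would argue that a new occurrence of $\P5$ must have its vertical spine equal to one of the top-level cuts $s_i$: a spine interior to a block, or a top-level horizontal cut, cannot serve, since the two horizontal segments of $\P5$ meet the spine in $T$-junctions on its two sides. A horizontal segment abuts $s_i$ from the left exactly when the block $C_i$ immediately to its left is not a singleton (being $\oplus$-indecomposable of size $\ge 2$, it is horizontally cut, and its top-level horizontal cuts span its full width up to $s_i$), and symmetrically on the right for $C_{i+1}$; so the two abutting segments exist precisely when $C_i$ and $C_{i+1}$ both have size $\ge 2$. This matches the permutation condition, and the correspondence is completed by the one remaining point: that the height ordering distinguishing $\P5$ (left segment high, right segment low) from its mirror image $\P6$ is exactly the one produced by a direct sum. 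I would deduce this from the diagonal normalization, which fixes the relative staircase orientation of consecutive blocks along the NW--SE diagonal and hence forces the left cut to lie above the right cut precisely in the direct-sum case.

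I expect this height/orientation matching to be the main obstacle. Everything up to it is a formal comparison of two ``a straddling occurrence exists iff two consecutive blocks are non-singletons'' statements; but the fact that $\P5$ --- and not its mirror $\P6$ --- is the pattern singled out, together with the vincular rather than classical nature of $2\underline{14}3$, is precisely what the diagonal structure must supply. I would therefore isolate a small geometric claim about the forced relative heights of the horizontal cuts of two consecutive blocks sharing a full-height vertical cut, prove it once from the diagonal property, and then feed it into the induction.
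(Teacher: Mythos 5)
Your proposal is correct and follows essentially the same route as the paper's (very brief) sketch: both localize a $\P5$ occurrence at a vertical cut of the recursive decomposition underlying $\delta$ and match it with a vincular occurrence straddling two consecutive descending blocks, which is exactly the ``two adjacent non-singleton blocks'' criterion the paper later uses for the enumeration. Your one flagged worry --- why $\P5$ rather than $\P6$, and why the heights come out right --- resolves just as you suspect: since $\RR$ avoids $\P3$, every left-abutting horizontal segment on a vertical cut lies above every right-abutting one, so the mere existence of both already yields $\P5$.
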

\noindent 
\begin{minipage}[b]{0.55\textwidth}  \internallinenumbers
\textit{Proof (sketch).} 
This result follows from the bijection~$\delta$ described above. 
An occurrence of $\P5$ in $\RR$ means that there are four rectangles
$a, b, c, d$ as in the drawing, where the segment that separates $a$ and $b$ 
from $c$ and $d$ is a  cut at some step of the recursive decomposition of $\RR$.
It follows that in  
\end{minipage} 
\begin{minipage}[b]{0.4\textwidth}
\ \ \ \ \includegraphics[scale=0.86]{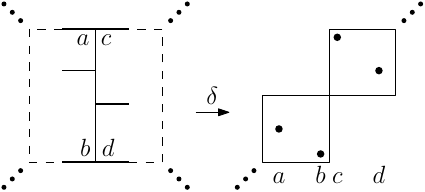} 
\end{minipage} \\
$\delta(\RR)$
we have four indices $a<b<c<d$ 
such that $a$ and $b$ belong to a descending block,
and~$c$ and~$d$ belong to the next descending block.
This yields an occurrence of $2\underline{14}3$ in $\delta(R)$.
The converse direction is based on similar considerations. \hfill $\qed$

\bigskip

\noindent\begin{minipage}[b]{0.73\textwidth}   \internallinenumbers
Now we enumerate $2\underline{14}3$-avoiding separable permutations.
Let $\pi$ be such a permutation.
If $\pi$ is ascending, then it either consists of at least two singletons,
or it has one or several descending blocks, which are separated by at least one 
singleton (see the drawing). For descending permutations, the decomposition is identical to Case~1,
since the skew sum of $2\underline{14}3$-avoiding ascending blocks cannot create a new occurrence 
of $2\underline{14}3$. This leads to the system
\end{minipage}
\begin{minipage}[b]{0.26\textwidth}
\begin{center}
\raisebox{0pt}{\includegraphics[scale=0.9]{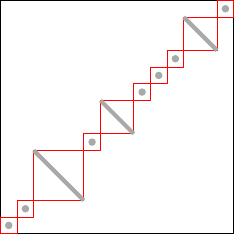}}
\end{center}
\end{minipage}
$\left\{A = \frac{t^2}{1-t} 
+ \left(\frac{1}{(1-t)^2}  \frac{1}{1- \frac{t D}{1-t}}  - 1\right)D, \ \ 
D = \frac{(t+A)^2}{1-(t+A)}\right\}$.
Solving this system (for example by computer algebra) yields Theorem~\ref{thm:main}(2).

\medskip
\noindent\textbf{Case~3: $\P7$-avoiding guillotine diagonal rectangulations}.\label{lab:case3}

First we show that
a guillotine diagonal rectangulation $\RR$ avoids $\P7$
if and only if $\delta(\RR)$ avoids $21354$.
As in Lemma~\ref{thm:PP12345}, 
the result directly follows from the definition of $\delta$.

Thus, we need to enumerate $21354$-avoiding separable permutations.
Let $\pi$ be an ascending separable permutation.
If $\pi$ has just one descending block, 
then $\pi$ avoids $21354$ if and only if this block avoids $21354$.
If $\pi$ has at least three descending blocks,
then it contains $21354$.
If $\pi$ has precisely two descending blocks, 
then $\pi$ is $21354$-avoiding if and only if they are adjacent, 
the first one is $213$-avoiding, and the second is $132$-avoiding.
An ascending $213$-avoiding permutation is 
either the identity permutation of size $\geq 2$,
or has at least one singleton and precisely one $213$-avoiding descending block (the last one). 
For descending permutations, 
an occurrence of $21354$ implies its occurrence in one of its ascending blocks:
hence, the decomposition is again identical to Case~1.
Let $S_A$ and $S_D$ be generating functions for ascending and, respectively, descending $213$-avoiding 
(and, equivalently, $132$-avoiding) permutations.
Then, we have 
$ 
\left\{S_A = \frac{t^2}{1-t}+ \frac{t S_D }{1-t}  , \ \
S_D = \frac{(t+S_A)^2}{1-(t+S_A)}\right\}
$, and  
for $21354$-avoiding separable permutations 
$\left\{{A = \frac{t^2}{1-t}+ \left(\frac{1}{(1-t)^2}-1\right) D  + \frac{S_D^2}{(1-t)^2} }, \ \
{D = \frac{(t+A)^2}{1-(t+A)}}\right\}$. 
These systems yield Theorem~\ref{thm:main}(3).

\vspace{13.3pt}

The treatment of other cases in Theorem~\ref{thm:main} is similar.
We first translate geometric patterns into permutation patterns,
obtaining some subclass of separable permutations.
Then its combinatorial specification yields a system of equations
that binds $A(t)$, the generating functions for ascending permutations in this class,
and $D(t)$ for descending permutations. In some cases we use an auxiliary family (as in Case~3 above).
Here we omit the details and only list permutation patterns, systems that bind $A(t)$ and $D(t)$, 
and, when relevant, auxiliary families and systems for their generating functions
$S_A$ and~$S_D$.

\medskip

\label{lab:case4}
\noindent\textbf{Case~4: $\{\P5, \P6\}$-avoiding guillotine diagonal rectangulations.} 
Such rectangulations are called \textit{one-sided guillotine rectangulations}~\cite{Felsner21}.
This family corresponds to $(2\underline{14}3, 3\underline{41}2)$-avoiding separable permutations.
Due to the symmetry of the model, we have $A(t)=D(t)$, and, therefore, just \textit{one} equation:
${A = \frac{t^2}{1-t} + \left(\frac{1}{(1-t)^2}  \frac{1}{1- \frac{tA}{1-t}} -1\right)A}$.

\medskip

\label{lab:case5}
\noindent\scalebox{0.95}[1]{\textbf{Case~5: $\{\P5, \P7\}$-avoiding guillotine diagonal rectangulations.}
They correspond to $2143$-} \\
avoiding separable permutations, the system is
$\left\{{A = \frac{t^2}{1-t} +  \left( \frac{1}{(1-t)^2}  -1 \right) D}, \ \ 
{D = \frac{(t+A)^2}{1-(t+A)}}\right\}$.

\medskip

\label{lab:case6}
\noindent\textbf{Case~6: $\{\P5, \P8\}$-avoiding guillotine diagonal rectangulations.} 
They correspond to $(2\underline{14}3, 45312)$-avoiding separable permutations.
The auxiliary class is $(2\underline{14}3, 231)$-avoiding permutations.
The system for the auxiliary class is
$\left\{{S_A = \frac{t^2}{1-t}+  \left( \frac{1}{1-\frac{tS_D}{1-t}} \frac{1}{(1-t)^2} - 1 \right) S_D }, \right.$
$\left.{S_D = \frac{t^2}{1-t}+ \frac{t S_A}{1-t} }\right\}$.
\  The system for $(2\underline{14}3, 45312)$-avoiding separable permutations is
$\left\{{A = \frac{t^2}{1-t}+ \left(\frac{1}{(1-t)^2}  \frac{1}{1-\frac{t D}{1-t}} -1\right) D }, \ \
{D = \frac{t^2}{1-t}+ \left( \frac{1}{(1-t)^2}-1\right)  A  + \frac{S_A^2 }{(1-t)^2}}\right\}$.

\medskip

\label{lab:case7}
\noindent\textbf{Case~7: $\{\P4, \P7, \P8\}$-avoiding guillotine diagonal rectangulations.} 
They correspond to $(21354, 45312)$-avoiding separable permutations.
The auxiliary class is $(45312, 213)$-avoiding permutations.
The system for the auxiliary class is
$\left\{{S_A = \frac{t^2}{1-t}+ \frac{S_D}{1-t}  }, \ \ 
S_D = \frac{t^2}{1-t}+ \right.$
$\left. \left(\frac{1}{(1-t)^2}-1\right)  S_A  
+ \left(\frac{1}{1-t}   \frac{t^2}{1-2t}\right)^2\right\}$.
The equation for $(21354, 45312)$-avoiding separable permutations is
${A =  \frac{t^2}{1-t} +  \left( \frac{1}{(1-t)^2}   - 1\right) A + \frac{S_D^2}{(1-t)^2} }$.

\medskip

\noindent\textbf{Case~8: $\{\P5, \P6, \P7\}$-avoiding guillotine diagonal rectangulations.} \label{lab:case8}
They correspond to $(2143, 3\underline{41}2)$-avoiding separable permutations.
This leads to the following system
$\left\{{A = \frac{t^2}{1-t}+ \left( \frac{1}{(1-t)^2}  -1 \right) D}, \right.$
$\left.{D = \frac{t^2}{1-t}+ \left( \frac{1}{(1-t)^2}  \frac{1}{1- \frac{t A}{1-t}}   - 1 \right) A}\right\}$.

\medskip

\label{lab:case9}
\noindent\textbf{Case~9: $\{\P5, \P7, \P8\}$-avoiding guillotine diagonal rectangulations.} 
They correspond to $(2143, 45312)$-avoiding separable permutations.
The auxiliary class is $(2143, 231)$-avoiding permutations.
The system for the auxiliary class is
$\left\{{S_A = \frac{t^2}{1-t}+ \left( \frac{1}{(1-t)^2}   - 1 \right) S_D }, \ \ 
S_D = \right.$ $\left. \frac{t^2}{1-t}+ \frac{t  S_A}{1-t}  \right\}$.
The system for $(2143, 45312)$-avoiding separable permutations is
$\left\{ A = \frac{t^2}{1-t}+ \right. $
$\left.\left(\frac{1}{(1-t)^2}  - 1 \right) D, \ \ 
{D = \frac{t^2}{1-t}+ \left(\frac{1}{(1-t)^2}-1 \right)A + \frac{S_A^2}{(1-t)^2}  }\right\}$.

\medskip
\label{lab:case10}
\noindent\textbf{Case~10: $\{\P5, \P6, \P7, \P8\}$-avoiding guillotine diagonal rectangulations.} 
They correspond to $(2143,3412)$-avoiding separable permutations.
The equation for this symmetric model is
${A = \frac{t^2}{1-t}+ \left( \frac{1}{(1-t)^2}  - 1 \right) A}$.

\label{lab:nonguil}
\section{Vortex rectangulations and whirls}\label{sec:vortex}
In this section we consider a class harder to enumerate, as it is not a guillotine case: rectangulations that avoid 
$\{\P1\P3\P4\P5\P6\P7\P8\}$ (that is, we forbid all our patterns except $P_2 = \P2$).
We denote this class of rectangulations by $\V$, and call them \textit{vortex} rectangulations.
Our ultimate goal is to prove the following conjecture by Merino and Mütze~\cite{MerinoMuetze2023}.

\begin{thm}\label{thm:1345678}
The generating function of $\V$ is $V(t)=tC^2(t)\big(1+t^2C^4(t)\big)$, where 
$C(t)=\frac{1-\sqrt{1-4t}}{2t}$ is the generating function of Catalan numbers.
The enumerating sequence of $\V$ is \oeis{A026029}.
\end{thm}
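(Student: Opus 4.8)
The plan is to establish a combinatorial decomposition of vortex rectangulations $\V$ that mirrors the structure of the target formula $V(t)=tC^2(t)\bigl(1+t^2C^4(t)\bigr)$, where the appearance of $C^2$ and $t^2C^4$ strongly suggests two disjoint substructures built from pairs of Catalan-enumerated objects. Since $\V$ avoids every pattern except the windmill $P_2=\protect\ptwo$, these rectangulations are \emph{not} guillotine in general, so the recursive cut-based decomposition used in Theorem~\ref{thm:main} is unavailable. First I would analyze what the presence of $P_2$ (and absence of $P_1$) forces geometrically: the single allowed windmill pattern should generate a distinguished ``spiral'' or nested structure, and I expect the avoidance of $\protect\pthr,\protect\pfou,\protect\pfiv,\protect\psix,\protect\psev,\protect\peig$ to rigidly constrain how rectangles can wrap around a central region. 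The key structural claim to isolate is that every vortex rectangulation decomposes around a central rectangle (or central windmill core) with two ``arms'' of rectangles attached, each arm being freely enumerated by a Catalan generating function.

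Next I would set up a bijection or a direct combinatorial specification realizing the two summands of $V(t)$. The term $tC^2(t)$ accounts for one rectangle (the factor $t$) together with two independent Catalan-enumerated attachments (the $C^2$); I would identify these as the two regions flanking a guillotine-like cut in the degenerate case where no genuine windmill is present, so that this summand counts the ``trivial'' or minimally-spiraled vortices. The second summand $t\,C^2(t)\cdot t^2C^4(t)=t^3C^6(t)$ should correspond to configurations containing an actual $P_2$ windmill: here $t^3$ would count three rectangles forming the windmill core and the six Catalan factors would count six independent sub-rectangulations filling the six regions created by the spiral arms. The main technical work is to prove that each such region is genuinely free (i.e., can be filled by \emph{any} Catalan-type rectangulation without creating a forbidden pattern, and conversely that no forbidden pattern can arise across regions).

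The hard part, I expect, will be proving the rigidity of this decomposition: showing that the avoidance conditions force a \emph{unique} way to locate the central core and the arms, and that the sub-regions interact only through the shared boundary segments in a way that never introduces $\protect\pthr$ through $\protect\peig$. Verifying that cross-region interactions are pattern-free typically requires a careful case analysis of how segments align along the internal walls separating adjacent Catalan blocks, and it is here that the precise geometric definitions of patterns $P_3,\dots,P_8$ (especially the subtle $P_7=\protect\psev$ condition about adjacent horizontal segments) must be checked exhaustively. Once the decomposition and its rigidity are established, the generating function follows immediately by translating the specification into the equation $V=tC^2(1+t^2C^4)$, and the identification with \oeis{A026029} can be confirmed by extracting the first several coefficients and matching them against the OEIS sequence, or by noting that $C(t)$ satisfies $C=1+tC^2$ and simplifying.
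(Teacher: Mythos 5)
Your plan reads the formula $V=tC^2(1+t^2C^4)$ as a direct combinatorial specification, with $tC^2$ counting the windmill-free vortices and $t^3C^6$ counting those containing a $P_2$. That split is not available, and small cases show it cannot be: the windmill-free vortices are exactly the rectangulations avoiding all eight patterns (Case~10 of Theorem~\ref{thm:main}), whose generating function is the \emph{rational} series $Z(t)=\frac{t(1-2t)}{1-4t+2t^2}=t+2t^2+6t^3+20t^4+68t^5+\cdots$, not $tC^2(t)=t+2t^2+5t^3+14t^4+\cdots$. Likewise $t^3C^6(t)=t^3+6t^4+27t^5+\cdots$ cannot count the vortices containing a windmill, since an occurrence of $P_2$ requires at least five rectangles; the true count of windmill-containing vortices starts $0,0,0,0,1,\ldots$ (a single one of size $5$). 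So the two summands of the closed form are an artefact of the final algebraic simplification, not the shadow of a two-case decomposition, and the ``central core with freely fillable Catalan regions'' picture you hope to rigidify does not exist.

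The paper's actual route is considerably more involved. It first proves that all windmills in a vortex are nested (Lemma~\ref{thm:whirl_str}), which reduces the problem to iterated substitution: a vortex is a nested sequence of whirls wrapped around an innermost core avoiding all eight patterns. The building block, the \emph{simple} whirl, is enumerated not by a product of independent Catalan factors but by a generating tree on four-component signatures whose rewriting rules translate into a functional equation with four catalytic variables; solving that equation gives $t^5C^4(t)$ for simple whirls, and the authors explicitly remark that no bijective explanation of this Catalan-power formula is known. The final answer is then assembled as $V=W\cdot Z$ with $W=1/(1-P/t)$, where $P$ is obtained from $t^5C^4$ by sequence constructions accounting for ``peeling''; only after simplification does $tC^2(1+t^2C^4)$ emerge. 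To salvage your approach you would need to replace the proposed two-term decomposition by this nesting/peeling/substitution structure, and to supply some argument (bijective or otherwise) for the $t^5C^4$ enumeration of simple whirls --- which is precisely the hard point the paper handles with the catalytic-variable machinery.
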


\label{def:whirl}
A vortex either avoids or contains the pattern $P_2 = \P2$. 
Vortices that avoid $\P2$ constitute Case~10 from Theorem~\ref{thm:main}. 
It remains to enumerate vortices with at least one $\P2$: 
such rectangulations will be called \textit{whirls}.
The \textit{interior} of a windmill is the (possibly further partitioned) rectangular area 
bounded by its segments. 
The entire rectangle being partitioned by a given rectangulation 
will be denoted by $R$.
  
\begin{lem}\label{thm:whirl_str}
If a whirl contains several windmills, then they are all nested.
In other words: for any two windmills, one of them entirely lies in the interior of the other.
\end{lem}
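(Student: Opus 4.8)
The plan is to read off, from the definition of the pattern $P_2$, the rigid local geometry of a windmill, and then to play two windmills against the prohibition of four rectangles meeting at a point. Write $I(W)$ for the interior of a windmill $W$. A windmill consists of four maximal wall pieces --- bottom, left, top, right --- each carrying one side of the rectangle $I(W)$ and protruding past exactly one of its corners in the counter-clockwise ``pinwheel'' fashion dictated by $P_2$ (the bottom side protrudes to the left, the left side upward, the top side to the right, the right side downward). In particular $I(W)$ is a \emph{block}: a rectangular union of cells whose boundary runs along walls, and hence itself a rectangulation. First I would record the global constraint that makes the argument uniform: since a whirl avoids $P_1$, \emph{all} its windmills have this same chirality, so their protruding arms always point the same way. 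I would also note that distinct windmills have distinct interiors, since the four bounding walls are determined by $I(W)$.

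Next I would fix two windmills $W_1,W_2$ and split into cases according to the relative position of the axis-parallel rectangles $I_1=I(W_1)$ and $I_2=I(W_2)$: nested (nothing to prove), \emph{crossing} (they overlap but neither contains the other), or \emph{disjoint}. The crossing case is where the pinwheel shape pays off. In each crossing configuration one interior has a corner strictly inside the other, and I would show that then a protruding arm of one windmill runs straight through a side of the other at a point interior to both walls: for instance, when $I_1$ lies to the lower-left of $I_2$, the right side of $I_1$, which as an arm extends \emph{downward} well below $I_1$, must cross the bottom side of $I_2$, producing a genuine crossing of a horizontal and a vertical wall. This is exactly a point where four rectangles meet, which is forbidden. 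The same mechanism, applied to the appropriate arm, disposes of the remaining overlap types (corner overlaps of the opposite orientation, and the degenerate ones where the two interiors share a bounding wall), each time yielding either a crossing of two walls or a four-valent vertex. Crucially, this step uses only the validity of the rectangulation together with the common chirality, and no other pattern.

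The remaining, and hardest, case is that $I_1$ and $I_2$ are disjoint; this is where I expect the real work to lie and where the other forbidden patterns enter. Here I would exploit the strong one-sidedness forced on a vortex by the avoidance of $P_3,P_4,P_5,P_6$: along any vertical wall all abutting horizontal walls arrive from a single side, and dually for horizontal walls (two horizontals meeting a vertical at different heights give $P_3$ or $P_5$, and at equal heights a four-valent vertex, so both sides cannot occur). Because the whirl is also diagonal (it avoids $P_3,P_4$), each windmill interior is crossed by the NW--SE diagonal and therefore corresponds to a contiguous interval of cells in the diagonal order; two disjoint interiors give two disjoint such intervals, with the diagonal meeting, say, $I_1$ before $I_2$. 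I would then trace the SE-facing arms of the earlier windmill and the NW-facing arms of the later one through the region separating the two blocks, and argue that one-sidedness leaves no room for both spirals to close up independently: the walls bounding the gap between $I_1$ and $I_2$ are forced into a configuration realising one of $P_5,P_6,P_7,P_8$ (with $P_7$ the natural culprit), or else an arm is again driven across a side. The main obstacle is precisely the bookkeeping here --- disjointness admits several sub-positions of $I_1$ relative to $I_2$, each of which must be pushed to a concrete forbidden pattern; the pinwheel crossing trick alone does not finish it, so the one-sidedness structure must carry the load.

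Finally, having excluded both crossing and disjoint, any two windmills satisfy $I_1\subsetneq I_2$ or $I_2\subsetneq I_1$. To upgrade interior-of-blocks containment to the statement of the lemma, I would observe that once $I_1\subsetneq I_2$, the arms of $W_1$ cannot cross the walls bounding $I_2$ (that would again be a wall-crossing), so all of $W_1$ lies inside $I_2$, i.e. in the interior of $W_2$. Since every pair of windmills is thus comparable, the windmills of a whirl form a single nested chain.
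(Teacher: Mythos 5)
Your reduction to three relative positions of the interiors is reasonable, and the crossing case is essentially sound (a wall of one windmill would have to cross a wall of the other transversally, creating a point where four rectangles meet). But the disjoint case --- which you yourself identify as where the real work lies --- is not actually proved: you describe what you \emph{would} do (``trace the arms'', ``argue that one-sidedness leaves no room for both spirals to close up'') and explicitly defer the bookkeeping, without exhibiting, for even one sub-position of $I_1$ relative to $I_2$, the concrete chain of deductions that ends in a forbidden pattern. Since two windmills with disjoint interiors is precisely the configuration the lemma must exclude, this is a genuine gap rather than a routine verification. (A smaller slip: it is not true that in every crossing configuration one interior has a corner strictly inside the other --- the ``plus-sign'' overlap has no such corner --- though that sub-case is disposed of directly because the boundary walls then cross.)

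For comparison, the paper avoids the case split altogether. From the fixed windmill it builds four \emph{alternating paths}: starting from each arm it alternately follows segments to their far endpoints (down then right from the right arm, and the three rotated analogues) until reaching the corners of $R$. These paths cut $R$ into the windmill's interior and four regions $R_1,\dots,R_4$, and avoidance of $P_3$, $P_5$, $P_7$ (and their rotations) forces every rectangle of $R_1$ and $R_3$ to span vertically between two consecutive alternating paths, and every rectangle of $R_2$ and $R_4$ to span horizontally --- a rigid brick-wall structure in which no windmill can be assembled from walls of a single region. Hence any second windmill must use segments from all four regions and therefore surrounds the first. This is exactly the global structural fact your one-sidedness observation is groping toward; completing your disjoint case would essentially amount to re-deriving it, so you may as well establish it directly and let it subsume the crossing case too.
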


\begin{proof}[Proof (sketch).]
Let $W$ be a whirl, and consider some specific occurrence of $\P2$.
Starting from the right vertical segment of this windmill, 
we alternately go along the segments 
downwards to their lower endpoint 
and rightwards to their right endpoint,
until we reach the SE corner of $R$.
Similarly we define four \textit{alternating paths}: see Figure~\ref{fig:whirls} where they are shown by red.

These alternating paths partition $R$ into five regions:
$R_1$, $R_2$, $R_3$, $R_4$, and the interior of the windmill. 
In our drawings we colour these regions by blue, red, yellow, green, and grey.
Then every rectangle in $R_1$ and in $R_3$ has its top and bottom sides on the alternating paths,
and every rectangle in $R_2$ and in $R_4$ has its left and right sides on the alternating paths
(see Figure~\ref{fig:whirls}).
To prove this, for example for $R_1$, one scans this region from the left to the right:
then the assumption that some rectangle in $R_1$ violates this condition leads to an occurrence of 
$\pthr$, $\pfiv$ or $\psev$. 
Moreover, for every rectangle in $R_1$
its NW corner has the shape $\td$
and its SW corner has the shape $\tu$.
It follows that if another windmill --- not in the interior of the given one --- exists, then 
its segments belong to four different regions $R_1$, $R_2$, $R_3$, $R_4$.
Hence, the given windmill is entirely included in this another one.
\end{proof}

A whirl with an empty interior can be drawn 
so
that all the rectangles in $R_1$ and $R_3$ have width~$1$,
and 
all the rectangles in $R_2$ and $R_4$ have height $1$, and such a representation is unique.
To see that, we modify the whirl so that its segments belong
to consecutive vertical and horizontal grid lines.
See Figure~\ref{fig:whirls}(a) for an example of a whirl
which has two nested windmills (the corners of their interiors are shown by small dots). 

\begin{figure}
\begin{center}
\includegraphics[scale=0.7]{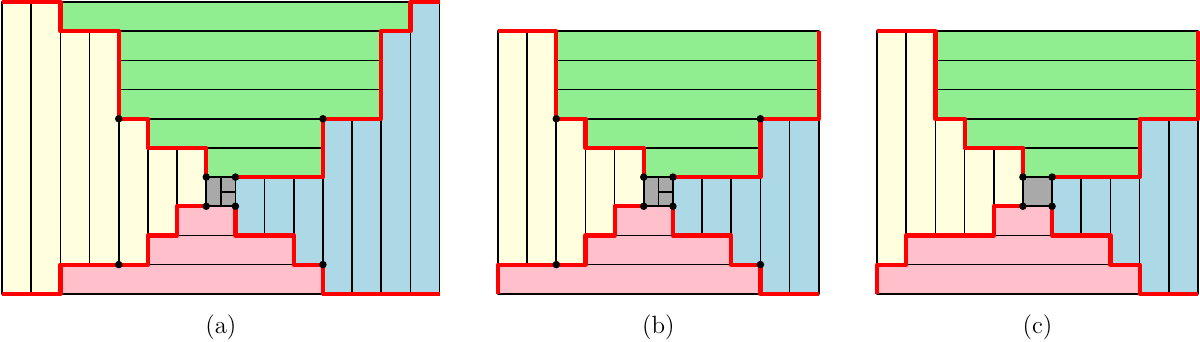} 
\end{center}
\abovecaptionskip=0pt
\caption{Three whirls: (a) is peelable, (b) is non-peelable, (c) is simple.}
\label{fig:whirls}
\end{figure}

A whirl is \textit{peelable} if it has a rectangle that extends from the 
top to the bottom or from the left to the right side of $R$.
From every peelable whirl it is possible to obtain a unique non-peelable whirl
by \textit{peeling} (i.e., successively deleting such rectangles).
Figure~\ref{fig:whirls}(b) shows a non-peelable whirl which is obtained from~\ref{fig:whirls}(a)
by peeling.

Finally, a \textit{simple} whirl is a non-peelable whirl with precisely one windmill $\P2$
whose interior is not further partitioned.
See Figure~\ref{fig:whirls}(c) for an example of a simple whirl.

\newpage
\subsection{Enumeration of simple whirls}
In this section we prove the following remarkable result: simple whirls are enumerated by $t^5C^4(t)$.
Our proof combines geometric-structural considerations, the generating tree method~\cite{BanderierBousquet-MelouDeniseFlajoletGardyGouyou-Beauchamps02}, and solving a functional equation with catalytic variables. 
It would be interesting to find an independent bijective proof. 

\subsubsection{Generating tree for simple whirls}\label{sec:gentree}
Denote by $S_1, S_2, S_3, S_4$ the eastern, southern, western, northern sides of~$R$.
A \textit{signature} of a simple whirl is the quadruple $(s_1,s_2, s_3, s_4)$,
where $s_i$ is the number of rectangles~from~$R_{i-1}$ 
that touch $S_i$. 
(The addition of indices of $R_i$'s and $S_i$'s is $\mathrm{mod} \ 4$.)
For example, the signature of the simple whirl from Figure~\ref{fig:whirls}(c)
is $(3, 2, 1, 2)$.

Given a simple whirl~$W$, it has exactly four \textit{corner rectangles} touching one~corner~of~$W$.
Since $W$ avoids \P1 and is not peelable, the corner rectangle touching the sides $S_i$ and $S_{i+1}$ has the same colour as the region $R_i$.
Then, as illustrated in the figure at the bottom of this page, from a simple whirl $W$ of size $n$, we can construct a simple whirl $W'$ of size $n+1$,
by adding a new corner rectangle to $R_i$ of length larger or equal to the length of the former corner rectangle
(and not touching the side $S_{i-1}$, to avoid creating a peelable whirl). 
Some rectangles of $R_{i-1}$ are then extended to reach the modified~$S_i$.
In $W'$, $s_{i+1}$~thus\linebreak increases by~$1$, and $s_{i}$ can assume all the values from $1$
to the (original)~$s_{i}$.
This generation algorithm has the drawback that some simple whirls are generated several times. 

To generate every simple whirl precisely once, we consider only those possibilities 
in which the added corner rectangle of $W'$ belongs to $R_i$ with the largest possible $i$
(that is, the largest $i$ such that in $W'$ we have $s_{i+1}>1$).
The new generation algorithm thus starts from the initial configuration $(1, 1, 1, 1)$  (only the unique simple whirl of size $5$ has this signature), and
applies the following rewriting rules
\begin{equation*}
\abovedisplayskip=0pt
\begin{array}{lcccl}
\text{1.} \ (a, \ b, \ 1, \ 1) \ \longrightarrow  \ (1, \ b+1, \ 1, \ 1), & & & &
\text{3.} \ (1, \ b, \ c, \ d) \ \longrightarrow  \ (1, \ b, \ [1..c], \ d+1),\\
\text{2.} \ (1, \ b, \ c, \ 1) \ \longrightarrow  \ (1, \ [1..b], \ c+1, \ 1),  & & & &
\text{4.} \ (a, \ b, \ c, \ d) \ \longrightarrow  \ (a+1, \ b, \ c, \ [1..d]).
\end{array}
\belowdisplayskip=3pt
\end{equation*}
The notation $[1..b]$ 
means that we generate $b$ signatures where this component takes the values $1, 2, \ldots, b$. 
The rules are not mutually exclusive: for example, all four rules can be applied on quadruples of the form $(1,b,1,1)$. 
The figure below shows all the descendants of a simple whirl $W$ with signature $(1, 2, 3, 1)$ on which the second, the third, and the fourth rules can be applied. 
The first rule does not apply since the resulting whirl $W'$ 
is obtained from a whirl different from $W$.
New corner rectangles are shown by bold boundary.\linebreak
\noindent\includegraphics[scale=0.753]{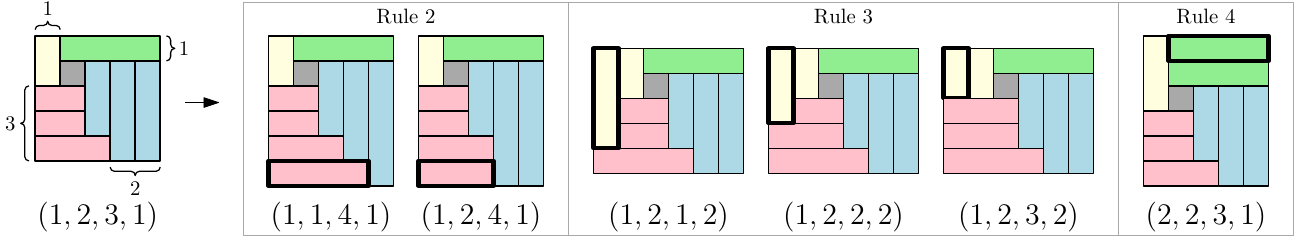} 

\newpage
\subsubsection{An intriguing functional equation}
\begin{thm}[Algebraicity of simple whirls]
Let $F(t,x_1,x_2,x_3,x_4)$ be the  multivariate generating function of simple whirls, 
where $z$ counts their size, 
and each $x_i$ counts the number of rectangles of colour $i$  touching their border. 
This generating function is algebraic and given by the  
closed form
\bgroup
\setlength{\abovedisplayskip}{4pt}
\setlength{\belowdisplayskip}{4pt}
\begin{equation}\label{Fclosedform}
F(t,x_1,x_2,x_3,x_4)= t^5 \frac {1}{2 \alpha} \left( \beta-\sqrt {\beta^2-4 \alpha e_4^2 } \right) 
\end{equation}
\egroup
with $\alpha:= \prod_{i=1}^4 (1-x_i+t x_i^2)$ and $\beta := (2 e_4 t^2 -  t (4e_4-3 e_3 +2e_2) + e_4-e_3+e_2 -e_1+2)e_4$, 
where $e_m:=  [t^m]  \prod_{i=1}^4  (1+tx_i)$  is the elementary symmetric polynomial of total degree~$m$.

In particular, the generating function of simple whirls is $F(t) = F(t,1,1,1,1)= t^5 C(t)^4$, where $C(t)$
is the generating function of Catalan numbers.
\end{thm}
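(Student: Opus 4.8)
The plan is to turn the generating tree of Section~\ref{sec:gentree} into a functional equation for $F$ in the four catalytic variables $x_1,x_2,x_3,x_4$, and then to solve it by the kernel method, using the cyclic symmetry of whirls to collapse everything onto the quadratic $\alpha F^2-\beta t^5 F+t^{10}e_4^2=0$. First I would write $F=\sum_{W} t^{|W|}x_1^{s_1}x_2^{s_2}x_3^{s_3}x_4^{s_4}$, the sum being over simple whirls $W$ with signature $(s_1,s_2,s_3,s_4)$ (this labels the $x_i$ by signature components rather than by colour as in the statement, but the two conventions differ only by a cyclic relabelling, which is harmless because the final answer is symmetric). Each rewriting rule then becomes a transfer operator whose geometric part is a truncated sum $\sum_{j=1}^{s}x^{j}$. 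Rule~$4$ applies to every node and contributes $t\,x_1\frac{x_4}{1-x_4}\big(F|_{x_4=1}-F\big)$; Rules~$3$, $2$, $1$ apply only when $s_1=1$, when $s_1=s_4=1$, and when $s_3=s_4=1$, so they involve the \emph{sections} $F^{(1)}:=[x_1^1]F$, $G:=[x_1^1 x_4^1]F$ and $H:=[x_3^1 x_4^1]F$. Adding the root term $t^5x_1x_2x_3x_4$ to the four transfer terms produces one functional equation linking $F$ to these three nested sections.

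Next I would solve this equation by the (iterated) kernel method. Clearing denominators in the Rule~$4$ term exposes the principal kernel $1-x_4+t\,x_1 x_4$, whose power-series root $x_4^\star=1/(1-t x_1)$ eliminates $F$ and expresses $F|_{x_4=1}$ in terms of the sections. Each of the remaining rules carries its own kernel in $x_3$, $x_2$, $x_1$ respectively, and substituting the corresponding roots peels off the sections $F^{(1)}$, $G$, $H$ one layer at a time. A crucial tool throughout is that the class of simple whirls is closed under a $90^{\circ}$ rotation, so $F$ is invariant under the cyclic shift $x_i\mapsto x_{i+1}$; this identifies sections arising at different stages and keeps the elimination bookkeeping finite.

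The hard part will be orchestrating these four coupled kernel cancellations so that they close up rather than generating an ever-growing hierarchy of sections. The expected payoff is that, after elimination, $F$ satisfies the single quadratic $\alpha F^2-\beta t^5 F+t^{10}e_4^2=0$ with $\alpha$ and $\beta$ as in the statement, and selecting the branch that is a formal power series in $t$ with nonnegative coefficients (equivalently, the branch vanishing to order $t^5$) yields precisely \eqref{Fclosedform}. A striking phenomenon to monitor here, and a strong consistency check, is that although the geometry guarantees only the cyclic $C_4$ symmetry, the coefficients $\alpha,\beta$ that emerge are fully symmetric in $x_1,\dots,x_4$; I would confirm this symmetry a posteriori, for instance by verifying with computer algebra that the closed form \eqref{Fclosedform} indeed satisfies the functional equation.

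Finally, specialising $x_1=x_2=x_3=x_4=1$ gives $e_m=\binom{4}{m}$, hence $\alpha=t^4$ and $\beta=1-4t+2t^2$, so the quadratic becomes $F^2-t(1-4t+2t^2)F+t^6=0$. Since $(1-4t+2t^2)^2-4t^4=(1-2t)^2(1-4t)$, the admissible root is $F(t)=\tfrac{t}{2}\big(1-4t+2t^2-(1-2t)\sqrt{1-4t}\big)$, which equals $t^5C(t)^4$ by the Catalan identity $tC^2=C-1$. This establishes the claimed univariate generating function for simple whirls.
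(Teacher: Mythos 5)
Your setup of the functional equation from the generating tree is faithful to what the paper does (the applicability conditions $s_1=1$, $s_1=s_4=1$, $s_3=s_4=1$ for rules 3, 2, 1 and the truncated-sum transfer operators are exactly right), and your final specialisation $x_i=1$, giving $\alpha=t^4$, $\beta=1-4t+2t^2$ and $F(t)=\tfrac{t}{2}\bigl(1-4t+2t^2-(1-2t)\sqrt{1-4t}\bigr)=t^5C(t)^4$, is correct. The genuine gap is the middle of your argument: you propose to solve the equation by four iterated kernel cancellations and explicitly defer ``orchestrating these four coupled kernel cancellations so that they close up'' as ``the hard part.'' That is not a minor technicality to be filled in later --- it is the entire difficulty. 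The equation has one fully catalytic variable ($x_4$, appearing via a genuine divided difference) but the other three rules act only on \emph{sections} of $F$ (coefficients of $x_i^1$), so the standard iterated kernel method does not obviously terminate, and the authors state flatly that no generic method is known for equations of this type. As written, your proof has an unproven central step.

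What is instructive is that the paper sidesteps this entirely, and your own ``consistency check'' is in fact the load-bearing argument if you promote it. The paper observes that the functional equation is a contraction on formal power series in $t$ (the right-hand side determines the coefficient of $t^n$ from coefficients of lower order), so it has a \emph{unique} power-series solution; it then simply verifies by substitution that the closed form \eqref{Fclosedform} satisfies the equation. Guessing the closed form is done by computer algebra on specialisations $F(t,x_1,11,31,71)$ together with the symmetry of $\alpha$ and $\beta$ (which, as you note, is stronger than the cyclic symmetry the geometry guarantees --- the paper justifies full symmetry of $F$ by rotations of whirls). To repair your proposal you need only two changes: state and prove the uniqueness lemma (your contraction or a coefficient-extraction induction), and reclassify your a-posteriori computer-algebra verification from ``consistency check'' to ``the proof,'' discarding the kernel-method plan or relegating it to a remark.
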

\begin{proof}
The generating tree from Section~\ref{sec:gentree} translates to the  
functional equation
\bgroup
\setlength{\abovedisplayskip}{4pt}
\setlength{\belowdisplayskip}{4pt}
\begin{align}\label{FuncEq1}
\begin{split}
&F(t,x_1,x_2,x_3,x_4)=t^5x_1x_2x_3x_4+t x_1x_2 x_3x_4 [x_3 x_4]F (t,1,x_2,x_3,x_4) \\
&+tx_1 x_2 x_3 x_4 \frac{[x_1x_4]F(t,x_1,x_2,x_3,x_4)-[x_1x_4]F(t,x_1,1,x_3,x_4)}{x_2-1}\\
&+tx_1 x_3 x_4 \frac{[x_1] F(t,x_1,x_2,x_3,x_4)- [x_1] F(t,x_1,x_2,1,x_4)}{x_3-1}\\
&+t x_1 x_4 \frac{F(t,x_1,x_2,x_3,x_4)-F(t,x_1,x_2,x_3,1)}{x_4-1}.
\end{split}
\end{align}
\egroup
Unfortunately, there are currently no generic methods to solve this type of catalytic functional equation.
Luckily, in our case, we were able to solve this equation; 
we invite the readers to give it a try before going on with their reading! 

\bgroup \thinmuskip=0.9\thinmuskip \medmuskip=.9\medmuskip \thickmuskip=0.71\thickmuskip
Here is how we solved it. First, recall that 
the valuation of a  
series $f(t)=\sum_{n\geq 0} f_n t^n$ is\linebreak the smallest integer~$n$ such that $f_n\neq0$ 
(and $\operatorname{val}(f(t))=+\infty$ if $f(t)=0$).
Thus,  Equation~\eqref{FuncEq1}\linebreak is a contraction in the metric space of formal power series
(equipped with the  
distance $d(f(t),g(t)) = 2^{-\operatorname{val}(f(t)-g(t))}$).
Therefore, the Brouwer fixed-point theorem ensures that\linebreak there is a unique  series $F$
satisfying Equation~\eqref{FuncEq1}.
Now, it can be checked (by substitution) that the closed form~\eqref{Fclosedform} satisfies the functional equation~\eqref{FuncEq1};
this proves the theorem. 
\egroup

Let us also explain how we guessed this closed form, as it offers a useful heuristic for dealing with similar equations.
The classical guessing technique using Padé approximants is too costly, so, instead, we used 
linear algebra to identify an algebraic equation of degree 2 (in $F$) and degree~2  (in $x_1$) for $F(t,x_1,11,31,71)$.  
It is not obvious from the functional equation that $F(t,x_1,x_2,x_3,x_4)$ is a symmetric function in the $x_i$'s --- yet, this follows from the fact that any rotation of a whirl is still a whirl.
Therefore, its minimal polynomial should also have symmetric coefficients in the $x_i$'s.
Then, when one obtains a monomial like $ 532642 x_1 = 2 x_1 \times 11^2\times 31 \times 71$, 
it makes sense to rewrite it as $2 x_1 x_2^2 x_3 x_4$,  and all the symmetric versions of this monomial will also appear as coefficients.
This leads to the minimal polynomial $\alpha G^2-\beta G+ e_2^4$ (for $G=F/t^5$) , and thus to the closed form~\eqref{Fclosedform}. 
\end{proof}

\subsection{Enumeration of whirls and vortices}

We go back from simple whirls to possibly peelable whirls with empty interior
by alternately adding sequences of rectangles on the two horizontal and the two vertical sides.
This yields the generating function
$P(t)= t^5C^4(t)  \bigg(\frac{2}{1-\left(\frac{1}{(1-t)^2} -1\right) }-1\bigg).$  

Such whirls can be transformed into a whirl with $>1$ windmills by substituting the interior by another whirl (see Figure~\ref{fig:whirls}).
Therefore, whirls $\mathcal W$ with empty interior of the innermost windmill
are enumerated as a sequence of $P(t)/t$. 
Thus we obtain the generating function $W(t)= \frac{1}{1-P(t)/t}$.

Finally, to get the family $\mathcal V$ of vortices (i.e., the rectangulations that avoid \mbox{$\P1, \P3, \P4,$} $\P5,\P6,\P7,\P8$), 
we replace the innermost windmill by a rectangulation that avoids all eight patterns 
(i.e., Case 10 from Theorem~\ref{thm:main}, counted by $Z(t)=\frac{t(1-2t)}{1 - 4t + 2t^2}$).
This~leads~to
\bgroup
\begin{equation*}
V(t)=W(t) Z(t) = (1-2t)\big( \, 1-4t+2t^2+(1-2t)\sqrt{1-4t} \, \big) /(2t^3)
=tC^2(t)(1+t^2C^4(t)),
\end{equation*}
\egroup
 which is exactly the generating function of the sequence \oeis{A026029},
 as conjectured in~\cite[Table~3, entry 1345678]{MerinoMuetze2023}.
  This concludes the proof of Theorem~\ref{thm:1345678}.\smallskip

As for any algebraic generating function, 
the corresponding sequence satisfies a~linear recurrence,
\bgroup \thinmuskip=0mu \medmuskip=0mu \thickmuskip=0mu
$(n+4)v_n -6(n+2) v_{n-1}+4 (2n-1) v_{n-2}=0$,
\egroup
from which one can compute~$v_n$ in time $O(\sqrt n \ln n)$ (counting each arithmetical operation with cost 1),
and singularity analysis gives
$v_n\sim 4^{n+2}/\sqrt{\pi}  n^{-3/2}$.
(See~\cite{BanderierDrmota15,FlajoletSedgewick09} for more references on all these facts.)

\section{Conclusion}

In this article, we solved several conjectures related to 
families of pattern-avoiding rectangulations and permutations.
We proved that all our generating functions are $\mathbb N$-algebraic\footnote{This is the class of generating functions counting words of length $n$ generated by unambiguous context-free grammars. It has many noteworthy structural and asymptotic properties~\cite{BanderierDrmota15,BousquetMelou06,ChomskySchutzenberger63}.},
and we  provide an interesting  example of $\N$-algebraic structure 
(the simple whirls, counted by $t^5 C^4(t)$) for which no context-free grammar is known.

Merino and Mütze~\cite[Table 3]{MerinoMuetze2023} are mentioning a few more families of rectangulations
for which the enumeration is still open. 
Some of them are in fact tractable with variants of methods presented in our article. 
These results will be included in the long version of our article.
It would also be of interest to consider further forbidden patterns, e.g., to determine which patterns  lead to algebraic, D-finite, D-algebraic generating functions. 
Also, is it the case that they all lead to a Stanley--Wilf-like conjecture: is the number of such rectangulations bounded by $A^n$, for some constant $A$?

In conclusion, the rectangulations, while having a very simple definition, are an inexhaustible source of challenging problems for generating function lovers!

\newpage
\small
\newlength{\bibitemsep}\setlength{\bibitemsep}{.125\baselineskip plus .05\baselineskip minus .05\baselineskip}
\newlength{\bibparskip}\setlength{\bibparskip}{0pt}
\let\oldthebibliography\thebibliography
\renewcommand\thebibliography[1]{ 
  \oldthebibliography{#1} 
  \setlength{\parskip}{\bibitemsep} 
  \setlength{\itemsep}{\bibparskip} 
}
\bibliographystyle{SLC}
\bibliography{AB}
\end{document}